\documentclass[journal]{IEEEtran}

\usepackage[T1]{fontenc}
\usepackage[utf8]{inputenc}
\usepackage{amsmath,amssymb,amsfonts,amsthm}
\usepackage{bm,cite,graphicx,booktabs,microtype,url,xcolor}
\graphicspath{{figures/}}
\newtheorem{theorem}{Theorem}
\newtheorem{lemma}{Lemma}
\newtheorem{corollary}{Corollary}
\newtheorem{proposition}{Proposition}
\DeclareMathOperator{\sgn}{sgn}
\begin{document}

\title{QoS-Aware Joint 1-Bit RIS Element Activation and\\ Fluid-Antenna Port Selection}

\author{Khalid~Yahya,~\IEEEmembership{Member,~IEEE}, 
Mahmoud~Aldababsa,~\IEEEmembership{Senior Member,~IEEE}%%%
}

\markboth{Submitted to IEEE Journal,~VOL.~XX, NO.~XX,~2026}%
{Aldababsa: QoS-Aware Joint 1-Bit RIS Element Activation and Fluid-Antenna Port Selection}
\maketitle
{\color{black}
{\color{black}
\begin{abstract}
This letter jointly optimizes fluid-antenna port selection, fixed-cardinality reconfigurable intelligent surface (RIS) element activation, and 1-bit reflection phases through an exact angular event sweep that extends full-activation divide-and-sort geometry to subset selection. We establish deterministic gain bounds and activation monotonicity, derive exact single-element outage expressions and high signal-to-noise ratio (SNR) asymptotics for independent RIS branches with correlated receiving ports, and prove diversity order equal to the RIS element count for fixed positive-definite RIS covariance. With perfect instantaneous channel state information, a quality-of-service aware policy activates the minimum number of elements required to meet a target SNR while preserving full-activation outage probability. An explicit overhead condition identifies when reduced activation improves net energy efficiency. Correlated-channel benchmarks illustrate subset-selection gains, aperture effects, and overhead-dependent energy-efficiency tradeoffs.
\end{abstract}
\begin{IEEEkeywords}
Fluid antenna systems, reconfigurable intelligent surfaces, quality of service.
\end{IEEEkeywords}
}

{\color{black}
\section{Introduction}

Reconfigurable intelligent surfaces (RISs) enhance wireless
communication by controlling reflected propagation paths
\cite{basar2019wireless,huang2019ee}. Their performance and
energy consumption depend on both the reflection coefficients
and the number of active elements. Zappone et al.
\cite{zappone2021optimal} optimize the element count and
continuous phases while accounting for channel-estimation and
feedback overhead. With 1-bit phase control, each active element
has only two reflection phases, which couples subset selection
to coherent combining. Sign alignment (SA) provides simple
binary-phase configurations \cite{stylianopoulos2024sign},
whereas divide-and-sort (DaS) globally solves the full-activation
discrete inner-product problem and admits a simplified linear
search \cite{xiong2024das}. Neither method directly optimizes
a fixed-cardinality active subset.

A fluid antenna system (FAS) selects a receiving port within
a finite spatial aperture using a single radio-frequency (RF)
chain \cite{wong2021fas,bepari2026survey}. Closely spaced ports
are spatially correlated; consequently, increasing their number
does not create independent fading branches. Partial-observation
port selection \cite{chai2022port} and outage and
block-correlation analyses \cite{new2024outage,ramirez2024block}
address acquisition requirements and spatial dependence.
Zhao and Slock \cite{zhao2025insights} establish
covariance-rank-based diversity for point-to-point FAS.
In an RIS-assisted link, the transmitter--RIS channel associated
with each element is shared across receiving ports, requiring
a distinct cascaded-channel analysis. Moreover, outdated
channel state information (CSI) can separate the selected
configuration from its realized performance
\cite{psomas2023outdated}.
Related studies consider joint fluid-antenna (FA) and RIS
design under broader communication objectives. Zhang et al.
\cite{zhang2024idet} jointly optimize FA port selection and
transmit beamforming for multiuser integrated data and energy
transfer with imperfect CSI, without an RIS. Aldababsa et al.
\cite{aldababsa2026multiris} investigate joint RIS/port selection
and RIS selection with port combining in multi-RIS-assisted
FAS. Zhang et al. \cite{zhang2026fasris} develop a two-timescale
design using statistical CSI for FA/RIS configuration and
instantaneous CSI for precoding. Chen and Lee
\cite{chen2026unified} jointly optimize discrete FA positions,
fixed-cardinality irregular RIS or simultaneously transmitting
and reflecting RIS topology, and low-resolution
coefficients through cross-entropy probability learning.
These studies establish relevant activation and discrete-design
precedents. Unlike the probabilistic multiuser design in
\cite{chen2026unified}, this work derives a globally optimal
solution for the scalar fixed-cardinality binary-phase problem
with receiver-port selection and establishes deterministic
and outage guarantees.

The motivation is to characterize exactly how binary phase
control and cardinality-constrained activation interact with
receiving-port selection. Selecting the strongest cascaded
coefficients need not maximize the combined channel power,
because their phase compatibility also matters. Beyond
full-activation DaS, subset membership changes at additional
projection-ordering events. We derive an exact angular event
sweep for joint port selection, fixed-cardinality activation,
and 1-bit phase control, and establish deterministic gain bounds
and reliability monotonicity. We obtain exact single-element
outage expressions and high signal-to-noise ratio (SNR)
asymptotics for independent RIS branches with correlated FAS
ports, and extend the diversity-order result to fixed
positive-definite RIS covariance. Finally, we develop a
quality-of-service (QoS)-aware minimum-element policy under
perfect instantaneous CSI, prove its full-activation outage
equivalence and average-activation identity, and derive the
overhead condition required for a net energy-efficiency
improvement.

{\color{black}
%\noindent\textit{Notation:} $\mathcal{CN}(\bm\mu,\bm\Sigma)$ denotes a circular complex Gaussian law with mean $\bm\mu$ and covariance $\bm\Sigma$; $\bm0_k,\bm I_k$ are the zero vector and identity matrix; $\operatorname{vec}$ stacks columns. $\|\cdot\|_0$ counts nonzeros; $\|\cdot\|$ is Euclidean norm; $j^2=-1$. $\Re,\arg,\sgn$ denote real part, phase and sign; superscripts $T,H,*$ denote transpose, conjugate transpose and conjugation; $\otimes$ denotes Kronecker product.
}

{\color{black}
\section{System Model and Joint Selection}
A single-antenna transmitter serves an $M$-port single-RF-chain FAS through an $N$-element passive RIS. Let $g_n$, $H_{m,n}$ and $h_{d,m}$ denote transmitter--RIS, RIS--port, and direct channels, collected in $\bm g\in\mathbb C^N$ and $\bm H\in\mathbb C^{M\times N}$. Set $\mathcal N=\{1,\ldots,N\}$ and $\mathcal M=\{1,\ldots,M\}$.
The spatially correlated Rayleigh-fading model is $\bm g \sim\mathcal{CN}\!\left(\bm0_N,\Omega_g\bm R_R\right)$, and $\operatorname{vec}(\bm H)
 \sim\mathcal{CN}\!\left(
 \bm0_{MN},\Omega_h\bm R_R^T\!\otimes\bm R_F\right)$. The average powers are $\Omega_g,\Omega_h$; $\bm R_R,\bm R_F$ are unit-diagonal positive-semidefinite RIS and FAS covariance matrices. The hops are independent; the direct link is arbitrary for optimization and blocked for outage.
}
For $M>1$, the ports are uniformly spaced over a physical aperture of length $W\lambda$, where $W$ is the aperture normalized by the carrier wavelength $\lambda$. Under the adopted two-dimensional isotropic scattering model, the FAS correlation is $[\bm R_F]_{i,j}=J_0\!\left(2\pi W\frac{|i-j|}{M-1}\right), i,j\in\mathcal M$, where $J_0(\cdot)$ is the zeroth-order Bessel function of the first kind; for $M=1$, $\bm R_F=[1]$.
{\color{black}
An element is inactive ($u_n=0$) or reflects with unit magnitude and phase $0$ or $\pi$ ($u_n=\pm1$): $u_n\in\{-1,0,+1\}, \|\bm u\|_0=L, L\in\{1,\ldots,N\}$. Inactive elements have negligible residual reflection, requiring a matched or strongly attenuating state.
}
{\color{black}
Set $c_{m,n}=H_{m,n}g_n$, $\bm c_m=[c_{m,1},\ldots,c_{m,N}]^T$, and $\bar\gamma=P/\sigma^2$, where $P$ and $\sigma^2$ are transmit power and receiver-noise variance. The received signal at port $m$ is
\begin{equation}
 y_m=\sqrt{P}\left(h_{d,m}+\bm c_m^T\bm u\right)s+w_m,
\end{equation}
where $\mathbb E[|s|^2]=1$ and $w_m\sim\mathcal{CN}(0,\sigma^2)$ is independent of the signal and channels. The port SNR is
\begin{equation}
 \gamma_m(\bm u)=\bar\gamma\left|h_{d,m}+\bm c_m^T\bm u\right|^2.
 \label{eq:instantaneous_snr}
\end{equation}
The joint design is
\begin{equation}
 G^\star(L)=\max_{\substack{1\le m\le M,\;\bm u\in\{-1,0,1\}^N\\\|\bm u\|_0=L}}
 \left|h_{d,m}+\bm c_m^T\bm u\right|^2,
\label{eq:problem}
\end{equation}
The optimized SNR is $\gamma^\star(L)=\bar\gamma G^\star(L)$; exhaustive search uses $M\binom NL2^L$ candidates. For threshold $\gamma_{\rm th}>0$,
\begin{equation}
\begin{aligned}
 P_{\rm out}(L)
 &=\Pr\!\left\{\gamma^\star(L)<\gamma_{\rm th}\right\}\\
 &=\Pr\!\left\{G^\star(L)<t\right\},\qquad
 t=\gamma_{\rm th}/\bar\gamma.
\end{aligned}
\label{eq:outage_definition}
\end{equation}
}
{\color{black}
Under outdated CSI, the selection variables are computed from
 $\widehat{\bm g}=\sqrt{\xi_g}\bm g+\sqrt{1-\xi_g}\bm e_g$,and $\widehat{\bm H}=\sqrt{\xi_h}\bm H+\sqrt{1-\xi_h}\bm E_h$. The innovations are zero-mean complex Gaussian, independent of each other and the channels. Parameters $\xi_g,\xi_h\in[0,1]$ are squared temporal correlations. Controls maximize estimated gain but realized SNR uses the actual channels; the direct link is blocked or known.
}

{\color{black}
\section{Exact Angular Characterization}
For angle $\psi$, let $q_{m,n}(\psi)=|\Re\{e^{-j\psi}c_{m,n}\}|$ and order them as $q_{m,(1)}\ge\cdots\ge q_{m,(N)}$.
\begin{theorem}[Exact joint selection]
The fixed-port optimum is
\begin{equation}
 G_m^\star(L)=\left[\max_{0\le\psi<2\pi}\left(\Re\{e^{-j\psi}h_{d,m}\}
 +\sum_{\ell=1}^{L}q_{m,(\ell)}(\psi)\right)\right]^2.
\label{eq:angular}
\end{equation}
For an optimizing angle, activate the top-$L$ projections with $u_{m,n}=\sgn(\Re\{e^{-j\psi}c_{m,n}\})$ and set all other entries to zero, where $\sgn(x)=+1$ for $x\ge0$ and $-1$ otherwise. Finally, $G^\star(L)=\max_{m\in\mathcal M}G_m^\star(L)$.
\end{theorem}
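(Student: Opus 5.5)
We prove the fixed-port identity by the standard variational representation of the modulus, $|z|=\max_{\psi\in[0,2\pi)}\Re\{e^{-j\psi}z\}$, with equality at $\psi=\arg z$. Applying it with $z=h_{d,m}+\bm c_m^T\bm u$ and exchanging the two maximizations gives
\begin{equation*}
\max_{\substack{\bm u\in\{-1,0,1\}^N\\ \|\bm u\|_0=L}}\bigl|h_{d,m}+\bm c_m^T\bm u\bigr|
=\max_{0\le\psi<2\pi}\,\max_{\substack{\bm u\in\{-1,0,1\}^N\\ \|\bm u\|_0=L}}\Bigl(\Re\{e^{-j\psi}h_{d,m}\}+\sum_{n=1}^N u_n\Re\{e^{-j\psi}c_{m,n}\}\Bigr).
\end{equation*}
The exchange is legitimate because both maxima are attained: the feasible set is finite and the objective is continuous in $\psi$ on a compact set. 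The square of the left-hand side is $G_m^\star(L)$, since squaring is monotone on nonnegative reals.

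Next we solve the inner problem for fixed $\psi$. It is linear and separable in $\bm u$. Write $a_n=\Re\{e^{-j\psi}c_{m,n}\}$. For any support set $\mathcal S$ with $|\mathcal S|=L$, we have $\sum_{n\in\mathcal S}u_na_n\le\sum_{n\in\mathcal S}|a_n|$, with equality when $u_n=\sgn(a_n)$. The tie convention $\sgn(0)=+1$ is harmless because it keeps $u_n\neq0$ and so preserves the cardinality. Maximizing $\sum_{n\in\mathcal S}|a_n|=\sum_{n\in\mathcal S}q_{m,n}(\psi)$ over $|\mathcal S|=L$ gives the sum of the $L$ largest values $q_{m,(1)}(\psi)+\cdots+q_{m,(L)}(\psi)$. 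This is an exchange argument; ties in the ordering do not affect the value. Substituting this into the outer maximum yields \eqref{eq:angular}.

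For the constructive part, let $\psi^\star$ attain the outer maximum and let $\bm u^\star$ be the top-$L$ sign vector at $\psi^\star$. Denote the bracketed maximum in \eqref{eq:angular} by $V$. We then have the chain
\begin{equation*}
|h_{d,m}+\bm c_m^T\bm u^\star|\ge\Re\{e^{-j\psi^\star}(h_{d,m}+\bm c_m^T\bm u^\star)\}=V\ge|h_{d,m}+\bm c_m^T\bm u|
\end{equation*}
for every feasible $\bm u$, where the last inequality follows from the exchange identity. Hence $\bm u^\star$ is optimal, and the optimal value is nonnegative, so $V\ge0$ and squaring is valid. The final claim $G^\star(L)=\max_m G_m^\star(L)$ is immediate because the maximization over the finite port index separates from the maximization over $\bm u$ for each port.

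The main point needing care is attainment of the outer maximum over $\psi$, so that an optimizing angle exists. The function $\psi\mapsto\Re\{e^{-j\psi}h_{d,m}\}+\sum_{\ell\le L}q_{m,(\ell)}(\psi)$ is continuous, since a sum of the top $L$ continuous functions is continuous (it is a maximum over subsets of finite sums). It is also $2\pi$-periodic, so the maximum over the compact circle is attained. Alternatively, the function equals the finite maximum over $\bm u$ of sinusoids. That representation shows it is piecewise sinusoidal between the finitely many events where some $q_{m,n}$ crosses zero or two projections swap order, which also justifies an exact event sweep.
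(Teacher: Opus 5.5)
Your proof is correct and follows the same route as the paper: the identity $|z|=\max_\psi\Re\{e^{-j\psi}z\}$, the interchange of the finite maximization over $\bm u$ with the angular maximum, and top-$L$ sign selection for the linear inner problem. You also spell out several points the paper leaves implicit: attainment of the angular maximum, why the $\sgn(0)=+1$ convention and ordering ties do no harm, and the constructive check that the induced control at an optimizing angle is optimal.
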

\begin{proof}
Use $|z|=\max_\psi\Re\{e^{-j\psi}z\}$ and interchange the finite maximization over $\bm u$ with the angular maximum. For fixed $\psi$, the objective is linear in $\bm u$, so the cardinality constraint selects the $L$ largest projection magnitudes and their signs.
\end{proof}
}

{\color{black}
For any nonzero complex number $z$, let $\mathcal Z(z)=\{\arg(z)+\pi/2+k\pi:k=0,1\}\pmod{2\pi}$. Signs or ordering change only at
\begin{equation}
\begin{aligned}
 \mathcal B_m
 &=\bigcup_{n\in\mathcal N}\mathcal Z(c_{m,n})\\
 &\quad\cup\!\bigcup_{\substack{i,j\in\mathcal N\\i<j}}
 \mathcal Z(c_{m,i}-c_{m,j})\\
 &\quad\cup\!\bigcup_{\substack{i,j\in\mathcal N\\i<j}}
 \mathcal Z(c_{m,i}+c_{m,j}).
\end{aligned}
\label{eq:events}
\end{equation}
Here $\mathcal Z(0)$ is empty: an identically zero coefficient or pairwise difference creates no isolated event. Sign events are zero projections of $c_{m,n}$. Ordering events equate absolute projections; factoring their squared difference gives zero projections of $c_{m,i}-c_{m,j}$ or $c_{m,i}+c_{m,j}$, explaining \eqref{eq:events}. Each nonzero complex argument has two zero-projection angles. Thus there are at most $2N+4\binom N2=2N^2$ events. Duplicates are merged and persistent ties use a deterministic index rule. The direct term affects the evaluated sum but does not affect subset ordering. The reference sweep selects the top projections per interval with a conservative full-sort bound of $\mathcal O(MN^3\log N)$; an incremental-order implementation can reduce this to $\mathcal O(MN^2\log N)$. For $L=N$, ordering events are unnecessary: a sorted sign-event sweep costs $\mathcal O(MN\log N)$, whereas simplified DaS attains linear search \cite{xiong2024das}.
}

{\color{black}
The sweep sorts events, evaluates the original gain of each midpoint-induced top-$L$ signed vector, and retains the best port/vector.
For an optimum with nonzero sum, choose its phase $\psi^\star$. Its projection equals the optimal magnitude. Within either adjacent interval, the induced control is constant; continuity makes its projection at $\psi^\star$ equal to that magnitude, even at a boundary tie. Its complex magnitude is therefore optimal. If the optimal magnitude is zero, every feasible control is optimal. Thus interval evaluation is exact even with degeneracies; if the event set is empty, any angle suffices.
}

\begin{theorem}[Deterministic guarantees]
For $h_{d,m}=0$, let $r_{m,(1)}\ge\cdots\ge r_{m,(N)}$ order $|c_{m,n}|$. Then
\begin{equation}
 G_m^\star(L)\ge\max\!\left\{\sum_{\ell=1}^{L}r_{m,(\ell)}^2,
 \frac{4}{\pi^2}\left(\sum_{\ell=1}^{L}r_{m,(\ell)}\right)^2\right\}.
\label{eq:guarantee}
\end{equation}
\end{theorem}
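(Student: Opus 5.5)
Both lower bounds will come from the angular characterization of Theorem 1 with $h_{d,m}=0$, which gives
\[
\sqrt{G_m^\star(L)}=\max_{\psi}\sum_{\ell=1}^{L}q_{m,(\ell)}(\psi).
\]
For any fixed subset $\mathcal S$ of size $L$ and any angle $\psi$, the top-$L$ sum dominates $\sum_{n\in\mathcal S}|\Re\{e^{-j\psi}c_{m,n}\}|$. I therefore fix $\mathcal S$ to be the indices of the $L$ largest magnitudes $r_{m,(1)},\ldots,r_{m,(L)}$ and work with
\[
F(\psi)=\sum_{n\in\mathcal S}|c_{m,n}|\,|\cos(\psi-\theta_n)|,
\qquad \theta_n=\arg c_{m,n}.
\]
Zero coefficients contribute nothing and can be ignored throughout.

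\textbf{Second bound (averaging).} The maximum of $F$ is at least its average over $\psi\in[0,2\pi)$. Since the mean of $|\cos|$ over a period is $2/\pi$,
\[
\max_\psi F(\psi)\ge \frac{2}{\pi}\sum_{\ell=1}^{L} r_{m,(\ell)}.
\]
Squaring gives the term $\frac{4}{\pi^2}\big(\sum_{\ell} r_{m,(\ell)}\big)^2$.

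\textbf{First bound (second moment).} Here I avoid angles and use a randomized sign argument. Draw i.i.d. uniform signs $u_n\in\{\pm1\}$ on $\mathcal S$, with $u_n=0$ off $\mathcal S$. Then
\[
\mathbb E\Big|\sum_{n\in\mathcal S}u_nc_{m,n}\Big|^2=\sum_{n\in\mathcal S}|c_{m,n}|^2,
\]
because the cross terms have zero mean. Some sign pattern therefore attains at least the average. That pattern is feasible with $\|\bm u\|_0=L$ (provided the chosen $c_{m,n}$ are nonzero, or trivially otherwise, since zero entries still count as active with any sign). Hence
\[
G_m^\star(L)\ge \sum_{\ell=1}^{L} r_{m,(\ell)}^2.
\]

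\textbf{Combining.} Taking the larger of the two bounds yields \eqref{eq:guarantee}. I expect no real obstacle. The only care needed is to check feasibility of the cardinality constraint when some $c_{m,n}=0$: activate any $L$ elements, with arbitrary signs on the zero ones. It is also worth stating explicitly that a top-$L$ sum always dominates any fixed-$L$ subset sum.
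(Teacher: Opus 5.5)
Your proof is correct and follows the paper's own argument. Random $\pm1$ signs on the $L$ strongest coefficients cancel the cross terms to give the first bound, and averaging the Theorem~1 angular objective restricted to that subset, using that $|\cos|$ has mean $2/\pi$, gives the second (your caveat about zero $c_{m,n}$ is unnecessary, since $\|\bm u\|_0$ counts the $\pm1$ entries of $\bm u$ and equals $L$ regardless of the channel values).
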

\begin{proof}
For the $L$ strongest magnitudes, averaging over all signs cancels cross terms, so one sign vector attains at least their summed powers. The angular mean of $|\cos(\cdot)|$ is $2/\pi$, which gives the second bound.
\end{proof}

{\color{black}
\begin{proposition}[Reliability monotonicity]
For every channel realization and $1\le L<N$, $G^\star(L+1)\ge G^\star(L)$; hence, $P_{\rm out}(L+1)\le P_{\rm out}(L)$.
\end{proposition}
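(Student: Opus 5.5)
Given $L<N$, take any fixed port $m$ and any feasible $\bm u$ with $\|\bm u\|_0=L$; we will construct $\bm u'$ with $\|\bm u'\|_0=L+1$ and $|h_{d,m}+\bm c_m^T\bm u'|\ge|h_{d,m}+\bm c_m^T\bm u|$. Taking $\bm u$ optimal at $(m,L)$ then gives $G_m^\star(L+1)\ge G_m^\star(L)$, and maximizing over $m\in\mathcal M$ gives $G^\star(L+1)\ge G^\star(L)$.

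For the construction, let $z=h_{d,m}+\bm c_m^T\bm u$ and pick an inactive index $k$ with $u_k=0$, which exists because $L<N$. Activating $k$ with sign $s\in\{-1,+1\}$ gives $z+s\,c_{m,k}$. The two candidates satisfy the parallelogram identity
\begin{equation*}
\tfrac12\bigl(|z+c_{m,k}|^2+|z-c_{m,k}|^2\bigr)=|z|^2+|c_{m,k}|^2\ge|z|^2,
\end{equation*}
so at least one sign yields $|z+s c_{m,k}|^2\ge|z|^2$. Equivalently, in the angular form of Theorem~1, at the optimizing $\psi$ the quantity $\Re\{e^{-j\psi}h_{d,m}\}+\sum_{\ell\le L}q_{m,(\ell)}(\psi)$ is nonnegative by optimality, and adding $q_{m,(L+1)}(\psi)\ge0$ cannot decrease it. Either argument suffices; the parallelogram one also covers a direct term $h_{d,m}$ without extra work.

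The step needing care is degeneracy: when $c_{m,k}=0$, activating it changes nothing, which is harmless since equality holds. The only constraint is that exactly $L+1$ elements be active with values $\pm1$, and the construction satisfies this. Nothing depends on the channel distribution, so the inequality holds for every realization.

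For the outage claim, the pointwise inequality gives the event inclusion $\{G^\star(L+1)<t\}\subseteq\{G^\star(L)<t\}$ for every $t>0$. Taking probabilities in \eqref{eq:outage_definition} yields $P_{\rm out}(L+1)\le P_{\rm out}(L)$. There is no real obstacle; the main point is that the inactive state lets us add an element with a favorable sign, which the averaging (parallelogram) argument guarantees.
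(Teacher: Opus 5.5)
Your proof is correct and follows essentially the same route as the paper: take the optimal $L$-element sum $z$ at its selected port, activate any inactive element with the better of its two signs, and observe that this cannot decrease $|z|^2$. The paper writes the exact maximum $|z|^2+|c_q|^2+2|\Re\{z^*c_q\}|$, whereas you lower-bound it by the two-sign average via the parallelogram identity; the outage inequality then follows from pointwise event inclusion, as you state.
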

\begin{proof}
Let $z$ be the optimal $L$-element sum at its selected port and add any inactive cascaded coefficient $c_q$, $q\in\mathcal N$. Then
$\max_{b_q\in\{\pm1\}}|z+b_qc_q|^2=|z|^2+|c_q|^2+2|\Re\{z^*c_q\}|\ge|z|^2$.
\end{proof}
}

{\color{black}
\section{Outage, Diversity, and Energy Efficiency}
Assume $h_{d,m}=0$. The following outage formulas use $\bm R_R=\bm I_N$; Theorem 3 extends the diversity exponent to positive-definite $\bm R_R$. Define $X_n=|g_n|^2$, $V_n=\max_m|H_{m,n}|^2$, and their common cumulative distribution function (CDF) $F_V$; $V$ denotes a generic copy of $V_n$.
}

\begin{lemma}
For every $L\ge1$, $G^\star(L)\ge\max_nX_nV_n$.
\end{lemma}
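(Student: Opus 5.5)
The plan is to reduce to $L=1$ and then invoke the reliability monotonicity proposition. With $h_{d,m}=0$ we have
\[
G^\star(L)\ge G^\star(1)\quad\text{for every } L\ge1,
\]
by repeated application of that proposition (the chain $G^\star(1)\le G^\star(2)\le\cdots\le G^\star(L)$ holds pathwise). So it suffices to show that $G^\star(1)\ge\max_n X_nV_n$.

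For $L=1$, I will exhibit a feasible point directly from the definition \eqref{eq:problem}. Fix any element $n\in\mathcal N$ and any port $m\in\mathcal M$. Take $\bm u=\bm e_n$, i.e., $u_n=+1$ and all other entries zero; this satisfies $\|\bm u\|_0=1$. The objective then equals
\[
|c_{m,n}|^2=|H_{m,n}|^2|g_n|^2 .
\]
Maximizing over $m$ gives $X_n\max_m|H_{m,n}|^2=X_nV_n$, because $X_n=|g_n|^2$ does not depend on $m$; this is exactly where the shared transmitter--RIS coefficient enters. Since $G^\star(1)$ maximizes over all ports and all single-element controls, it follows that $G^\star(1)\ge X_nV_n$ for every $n$, and hence $G^\star(1)\ge\max_n X_nV_n$.

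Equivalently, one can bypass monotonicity and use the deterministic guarantee \eqref{eq:guarantee}. Its first term gives
\[
G_m^\star(L)\ge\sum_{\ell=1}^L r_{m,(\ell)}^2\ge r_{m,(1)}^2=\max_n|c_{m,n}|^2 ,
\]
and taking the maximum over $m$ and swapping the two maxima yields the same bound.

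There is no real obstacle here; the only point to check is the interchange $\max_m\max_n |H_{m,n}|^2X_n=\max_n X_n\max_m|H_{m,n}|^2$. This holds because $X_n\ge0$ is common to all ports. Note also that the bound is pathwise and does not require $\bm R_R=\bm I_N$.
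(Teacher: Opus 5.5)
Your proof is correct, but your main route differs from the paper's. The paper argues directly at cardinality $L$. It takes the strongest element--port path $(m^\star,n^\star)$, for which $|c_{m^\star,n^\star}|^2=\max_nX_nV_n$, and picks any feasible $L$-subset $\mathcal S$ at port $m^\star$ containing $n^\star$. It then averages $|\sum_{n\in\mathcal S}u_nc_{m^\star,n}|^2$ over the $2^L$ sign patterns; the cross terms cancel, so some sign vector attains at least $\sum_{n\in\mathcal S}|c_{m^\star,n}|^2\ge\max_nX_nV_n$.

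You instead settle $L=1$ with the trivial feasible point $\bm u=\bm e_n$ and lift it to every $L$ through Proposition 1. The best-sign step in that proposition's proof is effectively a one-element-at-a-time, deterministic version of the paper's sign averaging. Your route makes the lemma an immediate consequence of $G^\star(1)=\max_nX_nV_n$ (the content of Corollary 1) plus monotonicity, with no averaging needed. The paper's route is self-contained and naturally yields the stronger intermediate bound by the total power of the chosen subset.

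Your alternative via the first term of \eqref{eq:guarantee} is essentially the paper's argument, since that bound is proved by the same sign averaging over the $L$ strongest coefficients at a port.
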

\begin{proof}
Choose a feasible subset containing the strongest element--port path. Averaging its sum power over binary signs gives the sum of selected path powers, so at least one sign vector is no weaker than that path.
\end{proof}

If $X_n$ is exponential with mean $\Omega_g$ and $F_V$ is the CDF of $V_n$, then
\begin{equation}
 P_{\rm out}(L)\le\left[\int_0^\infty F_V\!\left(\frac{t}{x}\right)
 \frac{e^{-x/\Omega_g}}{\Omega_g}\,dx\right]^N.
\label{eq:generalbound}
\end{equation}

\begin{corollary}[Exact single-element outage]
For $L=1$, $G^\star(1)=\max_nX_nV_n$ and \eqref{eq:generalbound} holds with equality.
\end{corollary}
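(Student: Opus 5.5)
For $L=1$ I would first establish the identity $G^\star(1)=\max_n X_nV_n$ and then compute its CDF directly.

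\textbf{Step 1: the identity.} With $h_{d,m}=0$ and $\|\bm u\|_0=1$, a feasible control activates a single element $n$ with sign $u_n=\pm1$. The objective at port $m$ is then
\begin{equation*}
|u_nc_{m,n}|^2=|H_{m,n}|^2|g_n|^2 .
\end{equation*}
The sign therefore plays no role. Maximizing jointly over $m\in\mathcal M$ and $n\in\mathcal N$, and exchanging the two finite maxima, gives
\begin{equation*}
G^\star(1)=\max_n |g_n|^2\max_m|H_{m,n}|^2=\max_n X_nV_n .
\end{equation*}
This makes the inequality of the preceding lemma tight at $L=1$.

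\textbf{Step 2: independence of the branches.} Since $G^\star(1)$ is a maximum, the outage event is an intersection:
\begin{equation*}
P_{\rm out}(1)=\Pr\{X_nV_n<t\ \forall n\}.
\end{equation*}
To factor this probability, I need the pairs $(X_n,V_n)$ to be independent across $n$ and each pair to have independent entries.

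\begin{itemize}
\item With $\bm R_R=\bm I_N$, the entries $g_n$ are i.i.d.\ $\mathcal{CN}(0,\Omega_g)$, so $X_n$ is exponential with mean $\Omega_g$.
\item The covariance of $\operatorname{vec}(\bm H)$ is $\Omega_h\bm I_N\otimes\bm R_F$, which is block diagonal. So the columns $\bm H_{:,n}$ are i.i.d.\ $\mathcal{CN}(\bm0,\Omega_h\bm R_F)$, and the $V_n$ are i.i.d.\ with common CDF $F_V$.
\item The two hops are independent, so all of $\{X_n\}$ and $\{V_n\}$ are mutually independent.
\end{itemize}

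The one point that needs care is that the correlation across FAS ports does not couple different elements. Each $V_n$ depends only on column $n$ of $\bm H$, and the block-diagonal structure makes distinct columns independent. This is what allows the exact product form despite correlated ports.

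\textbf{Step 3: the CDF.} The probability now factors as
\begin{equation*}
P_{\rm out}(1)=\prod_n\Pr\{X_nV_n<t\}.
\end{equation*}
For each factor, I condition on $X_n=x>0$:
\begin{equation*}
\Pr\{X_nV_n<t\}=\int_0^\infty \Pr\{V<t/x\}\,\frac{e^{-x/\Omega_g}}{\Omega_g}\,dx=\int_0^\infty F_V(t/x)\,\frac{e^{-x/\Omega_g}}{\Omega_g}\,dx .
\end{equation*}
The point $x=0$ has probability zero, and the distinction between strict and non-strict inequality does not matter because $V$ is continuous. Raising this integral to the power $N$ reproduces the right-hand side of \eqref{eq:generalbound} exactly.

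Finally, the upper bound \eqref{eq:generalbound} for general $L$ was itself obtained from the lemma's inequality $G^\star(L)\ge\max_nX_nV_n$. Step 1 shows that this inequality is an equality at $L=1$, so the bound holds with equality there.
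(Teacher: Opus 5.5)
Your proposal is correct and follows the route the paper takes implicitly. With a single active element the sign is irrelevant, so $G^\star(1)=\max_n X_nV_n$ makes the lemma's bound tight, and the product form in \eqref{eq:generalbound} then follows from branch independence under $\bm R_R=\bm I_N$; your explicit check that the block-diagonal covariance $\Omega_h\bm I_N\otimes\bm R_F$ makes the columns of $\bm H$, and hence the $V_n$, independent despite correlated ports is a step the paper leaves unstated, and you handle it correctly.
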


{\color{black}
\begin{corollary}[Independent ports]
For independent Rayleigh ports, let $a=t/(\Omega_g\Omega_h)$. Then
\begin{align}
 P_{\rm out}(L)&\le[\Psi_M(t)]^N,\label{eq:closed}\\[-1mm]
 \Psi_M(t)&=1+2\sum_{k=1}^{M}(-1)^k\binom{M}{k}\sqrt{ka}\,K_1\!\left(2\sqrt{ka}\right).\nonumber
\end{align}
Here $K_1$ is the order-one modified Bessel function of the second kind; equality holds for $L=1$.
\end{corollary}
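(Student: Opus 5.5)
The plan is to specialize the general bound \eqref{eq:generalbound}, whose equality case for $L=1$ is the Exact single-element outage corollary, to $\bm R_F=\bm I_M$ and $\bm R_R=\bm I_N$. By the Lemma and that corollary it is enough to show that the bracketed integral equals $\Psi_M(t)$; the bound for general $L$ and the equality for $L=1$ then follow at once.

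First, the CDF $F_V$. With independent ports, the entries $H_{m,n}$, $m\in\mathcal M$, are i.i.d. $\mathcal{CN}(0,\Omega_h)$. Hence $|H_{m,n}|^2$ is exponential with mean $\Omega_h$, and
\begin{equation*}
F_V(v)=\left(1-e^{-v/\Omega_h}\right)^M=\sum_{k=0}^{M}(-1)^k\binom Mk e^{-kv/\Omega_h}.
\end{equation*}
Independence across $n$ comes from $\bm R_R=\bm I_N$ and the Kronecker structure. Together with the independence of the two hops, this makes the pairs $(X_n,V_n)$ i.i.d., which is what gives the $N$-th power in \eqref{eq:generalbound}.

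Second, substitute this expansion into the integral and exchange the finite sum with the integral. The $k=0$ term integrates the exponential density and gives $1$. For $k\ge1$ the term is
\begin{equation*}
\frac{(-1)^k\binom Mk}{\Omega_g}\int_0^\infty \exp\!\left(-\frac{kt}{\Omega_h x}-\frac{x}{\Omega_g}\right)dx .
\end{equation*}
Evaluate it with the standard identity $\int_0^\infty e^{-\beta/x-\gamma x}\,dx=2\sqrt{\beta/\gamma}\,K_1(2\sqrt{\beta\gamma})$, taking $\beta=kt/\Omega_h$ and $\gamma=1/\Omega_g$. Then $\beta\gamma=ka$, and $\sqrt{\beta/\gamma}/\Omega_g=\sqrt{ka}$, so each term equals $2(-1)^k\binom Mk\sqrt{ka}\,K_1(2\sqrt{ka})$. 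Summing over $k$ reproduces $\Psi_M(t)$.

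The only step that needs care is this Bessel integral and its normalization. I would quote it from Gradshteyn--Ryzhik (3.471.9 with $\nu=1$), or derive it from the integral representation of $K_1$ with the substitution $x=\sqrt{\beta/\gamma}\,e^{s}$. The interchange of sum and integral is trivial, since the sum is finite and each term is integrable: the exponent tends to $-\infty$ at both ends.
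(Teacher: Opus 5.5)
Your proposal is correct and follows essentially the same route as the paper: substitute $F_V(v)=(1-e^{-v/\Omega_h})^M$ into \eqref{eq:generalbound}, expand binomially, and evaluate each $k\ge1$ term with the $K_1$ integral identity, while the $k=0$ term gives $1$. Your normalization ($\beta\gamma=ka$ and $\sqrt{\beta/\gamma}/\Omega_g=\sqrt{ka}$) is right, and, as you say, the bound for general $L$ and the equality for $L=1$ then come directly from the Lemma and the single-element corollary.
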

\begin{proof}
Use $F_V(v)=(1-e^{-v/\Omega_h})^M$, expand binomially in \eqref{eq:generalbound}, and apply $\int_0^\infty e^{-\beta x-\delta/x}dx=2\sqrt{\delta/\beta}K_1(2\sqrt{\beta\delta})$.
\end{proof}
}
\begin{theorem}[Correlated-hop diversity and port asymptotics]
For fixed $\bm R_R\succ0$, $r=\operatorname{rank}(\bm R_F)\ge1$, and every fixed $1\le L\le N$, the diversity $d_L=-\lim_{\bar\gamma\to\infty}\log P_{\rm out}(L)/\log\bar\gamma=N$. The following exact $L=1$ coefficients additionally require $\bm R_R=\bm I_N$. For $r>1$,
\begin{equation}
 P_{\rm out}(1)=\left[\frac{t}{\Omega_g}\mathbb E[V^{-1}]+o(t)\right]^N.
 \label{eq:asymcorr}
\end{equation}
For $r=1$, unit diagonal implies identical port powers, and with $a=t/(\Omega_g\Omega_h)$,
\begin{equation}
 P_{\rm out}(1)=\left[a\log(1/a)+O(a)\right]^N.
 \label{eq:asymrank1}
\end{equation}
For $L>1$, these expressions are asymptotics of the outage upper bound, not exact coding-gain formulas.
\end{theorem}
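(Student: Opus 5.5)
The diversity claim follows from a sandwich $P_{\rm out}(N)\le P_{\rm out}(L)\le P_{\rm out}(1)$ (Proposition 1) together with separate upper and lower bounds of order $\bar\gamma^{-N}$, up to logarithmic factors.

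\emph{Upper bound.} By Lemma 1, $P_{\rm out}(L)\le P_{\rm out}(1)=\Pr\{\max_n X_nV_n<t\}$, where $V_n=\max_m|H_{m,n}|^2$. Write $\bm g=\bm R_R^{1/2}\bm w$ and note that $\bm R_R\succ0$. The event $\{\max_nX_nV_n<t\}$ lies inside $\{|g_n|^2<t/V_n\ \forall n\}$. I will condition on $\bm H$. The density of $\bm g$ is bounded by $(\pi\Omega_g)^{-N}\det(\bm R_R)^{-1}$, so the conditional probability is at most $C\prod_n\min\{1,t/V_n\}$. Each $V_n\ge|H_{1,n}|^2$, and $\operatorname{vec}(\bm H)$ has covariance $\Omega_h\bm R_R^T\otimes\bm R_F$. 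Hence the row $\bm H_{1,:}$ is $\mathcal{CN}(\bm 0,\Omega_h\bm R_R^T)$, which is nondegenerate, and its density is also bounded. Therefore $\mathbb E\prod_n\min\{1,t/|H_{1,n}|^2\}$ factorizes up to a constant into $N$ copies of $\int_0^\infty\min\{1,t/v\}e^{-v}dv=O(t\log(1/t))$. This gives $P_{\rm out}(L)\le C'(t\log(1/t))^N$, and hence $d_L\ge N$.

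\emph{Lower bound.} By monotonicity it suffices to bound $P_{\rm out}(N)$ from below. Deterministically, $G^\star(N)\le\max_m(\sum_n|c_{m,n}|)^2\le N\sum_n|g_n|^2\max_{m,n}|H_{m,n}|^2$. I will intersect the events $\{\|\bm g\|^2<\epsilon t\}$ and $\{\max|H_{m,n}|^2<1/(N\epsilon)\}$, which are independent. The first has probability $\asymp t^N$ because $\bm R_R\succ0$, and the second has fixed positive probability. Thus $P_{\rm out}(N)\ge c\,t^N$, so $d_L\le N$. Combining the two bounds, and noting that $\log$ factors do not change the exponent, gives $d_L=N$.

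\emph{Exact $L=1$ asymptotics with $\bm R_R=\bm I_N$.} Corollary 1 gives $P_{\rm out}(1)=[\Pr\{XV<t\}]^N$ with $\Pr\{XV<t\}=\mathbb E[1-e^{-t/(\Omega_gV)}]$. For $r>1$, $V\ge|H_{m,n}|^2$, and $V$ dominates the maximum of a rank-$r\ge2$ Gaussian quadratic form. Its density near zero is therefore $O(v^{r-1})$, so $\mathbb E[V^{-1}]<\infty$. Since $(1-e^{-s})/s\uparrow1$ as $s\downarrow0$, monotone convergence yields $\Pr\{XV<t\}=(t/\Omega_g)\mathbb E[V^{-1}]+o(t)$. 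For $r=1$, $\bm R_F=\bm 1\bm 1^T$ because the diagonal is unit, so all ports are identical up to phase and $V\sim\text{Exp}(\Omega_h)$. Then Corollary 2 with $M=1$ gives $1-2\sqrt a K_1(2\sqrt a)$. The expansion $zK_1(z)=1+\frac{z^2}{2}\log(z/2)+O(z^2)$ with $z=2\sqrt a$ yields $a\log(1/a)+O(a)$. For $L>1$ these are merely the asymptotics of the upper bound (\ref{eq:generalbound}), as stated.

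\emph{Main obstacle.} The delicate step is the correlated upper bound. Conditioning on $\bm H$ while $\bm g$ is correlated requires the bounded-density argument. Tracking that the $\log(1/t)$ factor does not spoil the exponent, and justifying $\mathbb E[V^{-1}]<\infty$ for general rank $r\ge2$, is where care is needed. For the latter I would use that a $\chi^2$-like variable with $2r\ge4$ real degrees of freedom has finite inverse moment.
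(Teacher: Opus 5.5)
Your proposal is correct. For the exact $L=1$ coefficients you follow the paper's route. You condition on $V$, establish $\mathbb E[V^{-1}]<\infty$ for $r>1$, divide by $t$ and pass to the limit; your monotone convergence via $(1-e^{-s})/s\uparrow1$ works as well as the paper's dominated convergence. For $r=1$ you reduce to $V\sim\mathrm{Exp}(\Omega_h)$ and expand $1-2\sqrt a\,K_1(2\sqrt a)$.

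For the diversity claim your route is genuinely different. The paper compares the joint density of $(\bm g,\operatorname{vec}\bm Z)$ with white densities, $k_-f_{\lambda_-\bm I}\le f_{\bm\Sigma}\le k_+f_{\lambda_+\bm I}$. This sandwiches $P_{\rm out}(L)$ between independent-branch outage probabilities at rescaled thresholds. It then uses $S\le G^\star(L)\le L^2S$ with $S=\max_nX_nV_n$ and the product form of the CDF of $S$. You never reduce to independent branches. Your upper bound conditions on $\bm H$ and keeps only the first port's row. Your lower bound is a small-ball estimate $\Pr\{\|\bm g\|^2<\epsilon t\}\asymp t^N$, combined with $G^\star(N)\le N\|\bm g\|^2\max_{m,n}|H_{m,n}|^2$ and monotonicity. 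This is more elementary and makes the mechanism transparent. The lower bound uses nothing about $\bm R_F$ or $M$, and the upper bound shows that a single port already gives exponent $N$. However, your two bounds differ by a factor $(\log(1/t))^N$; this is harmless for $d_L$. The paper's sandwich instead transfers the exact order of the independent-branch outage, including the $r=1$ logarithm, up to constants.

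A few steps need tightening.

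\emph{Finiteness of $\mathbb E[V^{-1}]$.} This does not follow from $V\ge|H_{m,n}|^2$, because $|H_{m,n}|^2$ is exponential and has infinite inverse mean. You need the norm-equivalence bound $V=\Omega_h\|\bm A\bm z\|_\infty^2\ge\Omega_h\sigma_{\min}^2(\bm A)\|\bm z\|^2/M$ with $\bm z\sim\mathcal{CN}(\bm 0,\bm I_r)$. This yields $F_V(v)=O(v^r)$ and $\mathbb E[V^{-1}]\le M/(\Omega_h\sigma_{\min}^2(\bm A)(r-1))$, with no density claim needed.

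\emph{Density bound for $\bm H_{1,:}$.} In the upper bound, bounding this density by its supremum is not enough, since $\int_{\mathbb C}\min\{1,t/|h|^2\}\,dA(h)=\infty$. Use instead that it is at most a constant times the density of an i.i.d.\ $\mathcal{CN}(0,\Omega_h\lambda_{\max}(\bm R_R))$ vector; this is what your factorization with $e^{-v}$ actually requires.

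\emph{The factor for $\bm g$.} The same domination applied to $\bm g$ gives $C\prod_n\min\{1,t/V_n\}$ directly. A supremum bound on the density of $\bm g$ alone only yields $C\prod_n t/V_n$.

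\emph{Rank-one case.} A rank-one unit-diagonal $\bm R_F$ equals $\bm a\bm a^H$ with $|a_m|=1$, not necessarily $\bm 1\bm 1^T$. Your conclusion that the ports coincide up to phase is still the correct one.
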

\begin{proof}
Write a branch vector as $\bm h=\sqrt{\Omega_h}\bm A\bm z$, where $\bm z\sim\mathcal{CN}(\bm0,\bm I_r)$ and $\bm A\in\mathbb C^{M\times r}$ has full column rank and $\bm A\bm A^H=\bm R_F$. Norm equivalence gives positive constants $b_1,b_2$ such that $b_1\|\bm z\|^2\le V\le b_2\|\bm z\|^2$. Hence $F_V(v)=\Theta(v^r)$ near zero and $\mathbb E[V^{-1}]<\infty$ for $r>1$, following the covariance-rank reasoning of \cite{zhao2025insights}. Conditioning on $V$ gives the single-branch term $\mathbb E[1-e^{-t/(\Omega_g V)}]$. Dominated convergence after division by $t$ proves \eqref{eq:asymcorr}. If $r=1$, unit diagonal makes $V$ exponential with mean $\Omega_h$; the expansion of $1-2\sqrt a K_1(2\sqrt a)$ proves \eqref{eq:asymrank1}. Put $S=\max_n X_nV_n$. The deterministic sandwich $S\le G^\star(L)\le L^2S$ and the independent-branch CDF give exponent $N$ for every $L$.
For correlated RIS elements, factor $\bm R_F=\bm A\bm A^H$ with $\bm A\in\mathbb C^{M\times r}$ full column rank, and write $\bm H=\sqrt{\Omega_h}\bm A\bm Z$, with $\bm Z\in\mathbb C^{r\times N}$. The normalized vector $\bm x=[(\bm g/\sqrt{\Omega_g})^T,(\operatorname{vec}\bm Z)^T]^T\in\mathbb C^D$ has positive-definite covariance $\bm\Sigma=\operatorname{diag}(\bm R_R,\bm R_R^T\otimes\bm I_r)$. If $\lambda_-,\lambda_+$ are its smallest/largest eigenvalues and $D=N(r+1)$, Centered complex Gaussian densities $f_{\bm\Sigma}$ satisfy $k_-f_{\lambda_-\bm I}\le f_{\bm\Sigma}\le k_+f_{\lambda_+\bm I}$, where $k_\pm=\lambda_\pm^D/\det\bm\Sigma$. Scaling both hops by $\sqrt\lambda$ scales gain by $\lambda^2$. Integration over outage set therefore gives $k_-P_I(t/\lambda_-^2)\le P_{\rm out}(L)\le k_+P_I(t/\lambda_+^2)$, where $P_I(s)=\Pr\{G_I^\star(L)<s\}$ retains $\bm R_F$ but uses independent RIS branches. Fixed constants and threshold rescaling preserve its exponent $N$.
\end{proof}
Fixed full-rank RIS correlation changes finite-SNR reliability, not diversity. Singular RIS covariance is excluded; its rank alone does not determine the exponent. Rank-one ports add logarithmic behavior. Rank truncation changes the model; Corollary 2 is an ideal independent-port reference.
{\color{black}
For a target spectral efficiency $R_0$ in bit/s/Hz, the required SNR is $\gamma_{\rm th}=2^{R_0}-1$. The fixed-$L$ throughput, power and EE are
\begin{align}
 T(L)&=R_0[1-P_{\rm out}(L)],\qquad
 P_{\rm tot}(L)=A+LP_{\rm e},\nonumber\\[-1mm]
 \mathrm{EE}(L)&=\frac{T(L)}{P_{\rm tot}(L)},\qquad
 A=\frac{P}{\eta_{\rm PA}}+P_{\rm fix}.
\label{eq:ee}
\end{align}
Here $\eta_{\rm PA}$ is power-amplifier efficiency; $P_{\rm fix}$ includes common transceiver, full-CSI acquisition and standby costs; $P_e$ is incremental active-element power. EE is measured in bit/J/Hz.
}

{\color{black}
\subsection{QoS-Aware Adaptive Activation}
Adaptive results require perfect instantaneous CSI: outdated-CSI selection need not be feasible on the actual channel. Define
\begin{equation}
 L_{\rm Q}^{\star}=
 \min\!\left\{L\in\{1,\ldots,N\}:G^\star(L)\ge t\right\},
\label{eq:qosL}
\end{equation}
If infeasible, use $L_{\rm Q}^{\star}=N$ for power accounting and declare outage.
}

{\color{black}
\begin{proposition}[Exact adaptive search]
Bisection over $L$ finds \eqref{eq:qosL} exactly. Using the conservative full-sort bound, its complexity is $\mathcal O(MN^3\log^2N)$; an incremental-order sweep reduces it to $\mathcal O(MN^2\log^2N)$.
\end{proposition}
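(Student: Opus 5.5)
The plan has two parts: correctness, which comes from Proposition 1 (reliability monotonicity), and complexity, which comes from counting oracle calls to the exact sweep of Theorem 1.

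\emph{Correctness.} For the realized channel, Proposition 1 gives $G^\star(L+1)\ge G^\star(L)$ for every $1\le L<N$. Hence the predicate $\phi(L)=\mathbf 1\{G^\star(L)\ge t\}$ is monotone nondecreasing on $\{1,\ldots,N\}$: once it becomes true, it stays true. The feasible set in \eqref{eq:qosL} is therefore an interval $\{L_{\rm Q}^\star,\ldots,N\}$, or it is empty.

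\begin{enumerate}
\item Evaluate $\phi(N)$ first. If it is false, the problem is infeasible, and we return $L_{\rm Q}^\star=N$ and declare outage.
\item Otherwise, keep an interval $[\ell,h]$ with the invariant that $\phi(h)$ is true and $\phi(L)$ is false for all $L<\ell$. Initialize with $\ell=1$, $h=N$.
\item At each step, query the midpoint $\lfloor(\ell+h)/2\rfloor$ and shrink the interval, which preserves the invariant.
\item Stop when $\ell=h$. The invariant then forces $h=L_{\rm Q}^\star$.
\end{enumerate}

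Each query evaluates $G^\star(L)$ exactly, using the sweep that follows Theorem 1. That sweep is exact even under degeneracies, by the interval-continuity argument given there. So no comparison $G^\star(L)\ge t$ is ever misjudged, and the returned index is exactly the minimizer. The one subtle point is that exactness relies on $\phi$ being evaluated on the realized channel. This is why the perfect-instantaneous-CSI assumption is needed: Proposition 1 is a per-realization statement.

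\emph{Complexity.} Bisection makes $\lceil\log_2 N\rceil+1=\mathcal O(\log N)$ oracle calls. With the conservative full-sort bound, each call costs $\mathcal O(MN^3\log N)$. This comes from $\mathcal O(N^2)$ events per port: sorting them costs $\mathcal O(N^2\log N)$, and each of the $\mathcal O(N^2)$ intervals needs a top-$L$ selection by sorting at $\mathcal O(N\log N)$. Multiplying gives $\mathcal O(MN^3\log^2N)$. With the incremental-order implementation, the ordering is updated by adjacent transpositions at each event, so each call costs $\mathcal O(MN^2\log N)$ and the total is $\mathcal O(MN^2\log^2N)$.

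The main point to check is that the per-call costs already include the event generation and sorting and the evaluation of the original gain at each interval midpoint. The incremental variant must also maintain the running top-$L$ sum in $\mathcal O(1)$ or $\mathcal O(\log N)$ per event. I would argue this from the fact that an ordering event swaps two adjacent ranks, so the top-$L$ set changes only if the swap crosses rank $L$. A sign event only flips a sign, and the sum of absolute projections can be updated in constant time. If the events or the sort could instead be shared across different $L$, the extra $\log N$ factor would disappear. The stated bound is merely conservative, so it suffices as an upper bound.
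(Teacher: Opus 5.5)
Your proposal is correct and follows the paper's proof. Proposition 1 makes $G^\star(L)\ge t$ a monotone feasibility test, bisection then needs at most $\lceil\log_2N\rceil+1$ exact sweep calls, and multiplying by the per-call sweep cost gives both stated bounds; the paper also counts one extra sweep to recompute the final control (at most $\lceil\log_2N\rceil+2$ calls), which does not change the order.
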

\begin{proof}
Reliability monotonicity makes $G^\star(L)\ge t$ a monotone feasibility test, requiring at most $\lceil\log_2N\rceil+1$ feasibility calls; the supplied implementation additionally recomputes the final control, giving at most $\lceil\log_2N\rceil+2$ sweep calls.
\end{proof}
}

{\color{black}
\begin{corollary}[Reliability, average activation, and EE]
Under perfect CSI, with $\overline L_{\rm Q}=\mathbb E[L_{\rm Q}^{\star}]$,
\begin{align}
 P_{\rm out}^{\rm Q}&=P_{\rm out}(N),\qquad
 \overline L_{\rm Q}=1+\sum_{\ell=1}^{N-1}P_{\rm out}(\ell),\label{eq:Lbar}\\
 \mathrm{EE}_{\rm Q}&=\frac{R_0[1-P_{\rm out}(N)]}{A+P_{\rm e}\overline L_{\rm Q}}\ge \mathrm{EE}(N).\label{eq:qosee}
\end{align}
The inequality is strict when $P_{\rm e}>0$ and $\Pr\{L_{\rm Q}^\star<N\}>0$. For fixed $\bm R_R\succ0$ under Theorem 3, its diversity is $N$.
\end{corollary}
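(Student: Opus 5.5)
The plan is to treat the adaptive policy as a deterministic function of the channel realization and to rely on Proposition~1 (reliability monotonicity) throughout. Under perfect CSI, $G^\star(L)$ is computed on the actual channels, so $L_{\rm Q}^\star$ is the first $L$ with $G^\star(L)\ge t$. Because $G^\star(\cdot)$ is nondecreasing in $L$ realization by realization, the events $\mathcal E_\ell=\{G^\star(\ell)<t\}$ are nested: $\mathcal E_{\ell+1}\subseteq\mathcal E_\ell$. This nesting is the single structural fact from which every claim follows.

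\emph{Outage identity.} The adaptive policy is in outage exactly when no $L\le N$ is feasible. By the nesting, this event equals $\mathcal E_N=\{G^\star(N)<t\}$, so $P_{\rm out}^{\rm Q}=P_{\rm out}(N)$. Conversely, whenever $G^\star(N)\ge t$, the activated $L_{\rm Q}^\star$ elements meet the target. Perfect CSI matters here: the realized SNR then equals the computed $\bar\gamma G^\star(L_{\rm Q}^\star)\ge\gamma_{\rm th}$.

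\emph{Average activation.} For a random variable with values in $\{1,\ldots,N\}$, I would use the tail-sum formula $\mathbb E[L_{\rm Q}^\star]=\sum_{\ell=0}^{N-1}\Pr\{L_{\rm Q}^\star>\ell\}$. The $\ell=0$ term is $1$. For $1\le\ell\le N-1$, I would show $\{L_{\rm Q}^\star>\ell\}=\mathcal E_\ell$. In the feasible case, $L_{\rm Q}^\star>\ell$ means $G^\star(k)<t$ for all $k\le\ell$, which by nesting is the same as $\mathcal E_\ell$. The infeasible convention $L_{\rm Q}^\star=N$ is consistent with this, since in that case $\mathcal E_\ell$ holds for every $\ell$ and $N>\ell$. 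Summing gives \eqref{eq:Lbar}. The step that needs care is this convention, and it is the reason the sum stops at $N-1$.

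\emph{Energy efficiency.} The adaptive throughput is $R_0[1-P_{\rm out}(N)]$, and the expected consumed power is $A+P_{\rm e}\overline L_{\rm Q}$. I would define $\mathrm{EE}_{\rm Q}$ as this ratio of averages, consistent with \eqref{eq:ee}. Since $L_{\rm Q}^\star\le N$, we get $\overline L_{\rm Q}\le N$, and hence $\mathrm{EE}_{\rm Q}\ge R_0[1-P_{\rm out}(N)]/(A+NP_{\rm e})=\mathrm{EE}(N)$. For strictness: $N-\overline L_{\rm Q}=\mathbb E[N-L_{\rm Q}^\star]>0$ whenever $\Pr\{L_{\rm Q}^\star<N\}>0$. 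Multiplying by $P_{\rm e}>0$ makes the denominator strictly smaller. The numerator is positive as long as $P_{\rm out}(N)<1$, and this already holds because $\Pr\{L_{\rm Q}^\star<N\}>0$ forces feasibility with positive probability.

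\emph{Diversity.} Since $P_{\rm out}^{\rm Q}=P_{\rm out}(N)$ holds identically for every $\bar\gamma$, Theorem~3 with $L=N$ gives diversity $N$.

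The main obstacle is modest. It lies in stating precisely that the expected-power ratio is the intended EE definition, and in handling the infeasible convention in the tail sum.
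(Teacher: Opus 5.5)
Your proposal is correct and takes essentially the same route as the paper. Monotonicity nests the outage events, so $\Pr\{L_{\rm Q}^\star>\ell\}=P_{\rm out}(\ell)$ for $1\le\ell<N$, and the tail-sum identity gives \eqref{eq:Lbar}; then $\overline L_{\rm Q}\le N$ together with equal throughput gives \eqref{eq:qosee}. You also make explicit what the paper's terse proof leaves implicit: the infeasible convention $L_{\rm Q}^\star=N$ in the tail sum, positivity of the numerator for strictness via $\{L_{\rm Q}^\star<N\}\subseteq\{G^\star(N)\ge t\}$, and the diversity claim obtained from $P_{\rm out}^{\rm Q}=P_{\rm out}(N)$ and Theorem~3 with $L=N$.
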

\begin{proof}
Success is possible exactly when $G^\star(N)\ge t$. For $1\le\ell<N$, monotonicity gives $\Pr\{L_{\rm Q}^\star>\ell\}=P_{\rm out}(\ell)$; the tail-sum identity yields \eqref{eq:Lbar}. Finally, $\overline L_{\rm Q}\le N$ while adaptive and full-activation throughputs are equal.
\end{proof}
}

{\color{black}
\subsection{Overhead-aware interpretation}
Let $T_c$ be the coherence duration. For adaptive policy Q and full activation F, indexed by $j\in\{\mathrm Q,\mathrm F\}$, set $\tau_j=\tau_{\rm CSI}+t_{\rm solve,j}+B_j/R_{\rm ctrl}+\tau_{\rm sw,j}$ and $E_j=P_{\rm proc}t_{\rm solve,j}+E_{\rm CSI,j}^{\rm extra}+P_{\rm ctrl}B_j/R_{\rm ctrl}+E_{\rm sw,j}$. Here $\tau_{\rm CSI},t_{\rm solve,j},B_j,R_{\rm ctrl},\tau_{\rm sw,j}$ denote CSI time, solve time, control bits, control rate and switching time; $P_{\rm proc},P_{\rm ctrl}$ denote incremental processing/control powers; $E_{\rm CSI,j}^{\rm extra},E_{\rm sw,j}$ denote extra CSI and switching energies. These energies exclude costs in $A$. Hardware consumption is technology-dependent \cite{wang2024power}; $P_e$ models a specified incremental active-state cost, not universal binary-phase hardware. With $\alpha_j=1-\tau_j/T_c>0$, define
\begin{equation}
\begin{aligned}
 \mathrm{EE}^{\rm net}_j&=\alpha_jR_0[1-P_{\rm out}(N)]/D_j,\\
 D_{\rm Q}&=A+P_e\overline L_{\rm Q}+E_{\rm Q}/T_c,\\
 D_{\rm F}&=A+NP_e+E_{\rm F}/T_c.
\end{aligned}
 \label{eq:netee}
\end{equation}
Adaptation improves net EE exactly when $D_{\rm Q}<(\alpha_{\rm Q}/\alpha_{\rm F})D_{\rm F}$. For equal durations this reduces to
\begin{equation}
 \frac{E_{\rm Q}-E_{\rm F}}{T_c}<P_e(N-\overline L_{\rm Q}).\label{eq:break}
\end{equation}
Both policies require full cascaded CSI. A mask-plus-phase message uses $B_Q=2N+\lceil\log_2M\rceil$ bits versus $B_F=N+\lceil\log_2M\rceil$, before common framing. For $\tau_j<T_c$ almost surely and channel-dependent overhead, replace the numerator by $R_0\mathbb E[\alpha_j\mathbf1_{\{G^\star(N)\ge t\}}]$ and use $D_j=A+P_e\mathbb E[L_j]+\mathbb E[E_j]/T_c$. Here $L_{\rm Q}=L_{\rm Q}^\star$, $L_{\rm F}=N$, and $\mathbf1$ is the event indicator. %This retains latency--success dependence; mean-runtime plots use fixed budgets.
}

{\color{black}
\section{Numerical Results}
{\color{black}
Unless otherwise stated, the numerical results assume
a blocked direct link and $\Omega_g=\Omega_h=1$.
}
{\color{black}
\begin{figure}[t]
 \centering
 \includegraphics[width=.70\columnwidth]{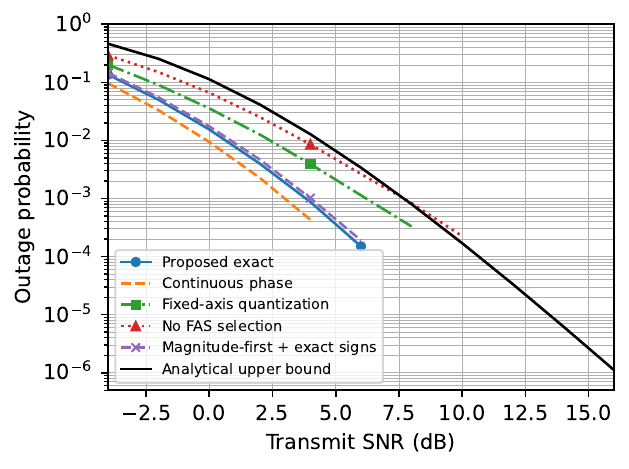}
\caption{Outage comparison for $N=4$, $M=2$, $L=2$, and
$\gamma_{\rm th}=1$. }
 \label{fig:outage}
\end{figure}
At $\bar\gamma=4$ dB in Fig.~\ref{fig:outage}, the proposed
outage is approximately $9\times10^{-4}$, compared with
$4\times10^{-3}$ for fixed-axis quantization,
$9\times10^{-3}$ without FAS selection, and
$1\times10^{-3}$ for magnitude-first activation with
exact binary phases. These comparisons demonstrate the
benefits of joint port, subset, and phase optimization:
fixing the port limits spatial selection, while
magnitude-based activation can overlook coefficients
that combine more coherently. Continuous-phase control
provides an ideal phase-resolution benchmark.

%The fixed-axis rule uses only one projection axis, whereas SA \cite{stylianopoulos2024sign} compares real- and imaginary-axis configurations. Exact binary-phase optimization on a fixed subset attains the DaS-equivalent optimum.
}

{\color{black}
\begin{figure}[t]
 \centering
 \includegraphics[width=.70\columnwidth]{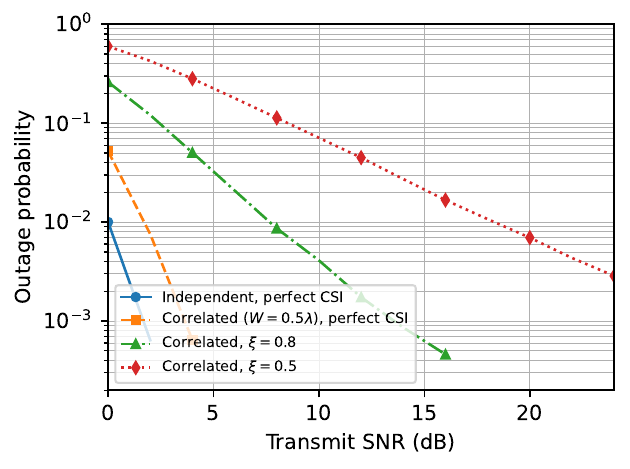}
 \caption{Fixed-$L$ effects of FAS correlation and outdated CSI for $N=M=8$, $L=4$ and $\gamma_{\rm th}=10$.}
 \label{fig:corr}
\end{figure}
At $\bar\gamma=2$ dB in Fig.~\ref{fig:corr}, the $W=0.5$
aperture raises outage relative to the independent-port
reference from approximately $6\times10^{-4}$ to
$8\times10^{-3}$. Correlated ports offer fewer distinct
favorable fades, worsening finite-SNR reliability without
changing the diversity order under the theorem's
fixed-covariance assumptions. Outdated-CSI selection
further raises outage to approximately $0.12$ for
$\xi_g=\xi_h=0.8$ and $0.43$ for $\xi_g=\xi_h=0.5$,
because the configuration selected using estimated
channels may no longer be favorable on the actual channels.
}

{\color{black}
\begin{figure*}[t]
 \centering
 \includegraphics[width=0.78\textwidth]{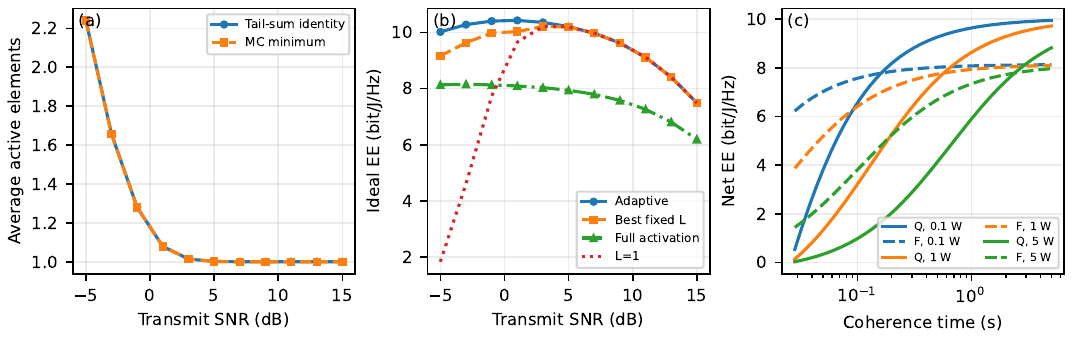}
 \caption{Perfect-CSI QoS adaptation for $N=8$, $M=4$, $W=1$, $R_0=2$ bit/s/Hz, $\gamma_{\rm th}=2^{R_0}-1=3$, $\sigma^2=1$ mW, $P=\bar\gamma\sigma^2$, $\eta_{\rm PA}=0.4$, $P_{\rm fix}=0.18$ W, $P_{\rm e}=8$ mW, and $3\times10^4$ exact realizations: (a) average activation, (b) ideal EE, and (c) net EE.}
 \label{fig:qos}
\end{figure*}
}

{\color{black}
Fig.~\ref{fig:qos}(a) verifies \eqref{eq:Lbar}.
At $-5$ dB, full activation has an outage probability
of approximately $5\times10^{-3}$, whereas adaptation
preserves that reliability using about $2.2$ elements
on average. In Fig.~\ref{fig:qos}(b), the adaptive policy achieves approximately
$10$ bit/J/Hz, compared with $9.2$ for the best fixed-$L$
policy and $8.1$ for full activation.
Fig.~\ref{fig:qos}(c) accounts for serial solver latency
and energy for both policies. Processor powers of
$0.1$, $1$, and $5$ W and input/output costs are assumed. Short coherence
blocks can erase the activation saving, whereas longer
blocks amortize processing overhead. At high SNR,
$\overline L_{\rm Q}\to1$, while increasingly demanding
SNR targets require more active elements and eventually
lead to full activation.
}

{\color{black}
\begin{figure}[t]
\centering\includegraphics[width=.70\columnwidth]{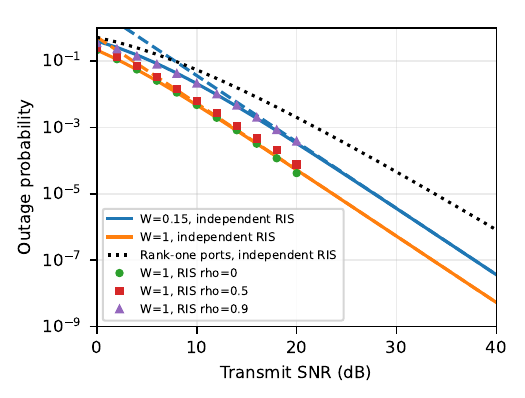}
\caption{Correlated-port single-element outage: $N=2$, $M=4$, $L=1$, $\gamma_{\rm th}=1$.}\label{fig:asym}
\end{figure}
In Fig.~\ref{fig:asym}, increasing RIS correlation
$\rho$ on both hops worsens finite-SNR outage in the
considered cases, but the diversity order remains $N$
for fixed positive-definite RIS covariance.
This order is established analytically by Theorem~3,
rather than inferred from finite-SNR curve slopes.
\begin{figure}[t]
\centering\includegraphics[width=.70\columnwidth]{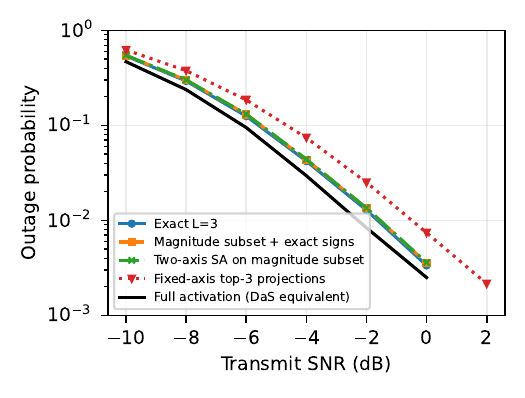}
\caption{Correlated-channel benchmarks: $N=M=4$, $W=0.5$, $\gamma_{\rm th}=1$.}\label{fig:benchcorr}
\end{figure}

In Fig.~\ref{fig:benchcorr}, exact selection achieves
a gain no smaller than any feasible heuristic on each
channel realization. Magnitude-first activation can
miss subsets whose coefficients combine more coherently.
The two-axis SA rule \cite{stylianopoulos2024sign} is
applied to the magnitude-selected subset, whereas the
fixed-axis top-$L$ rule selects elements by their absolute
real projections. Full activation uses one additional
element and provides a reliability reference with a
different activation-power cost. The point-to-point
FAS formulas in \cite{zhao2025insights} describe a
noncascaded link and are compared analytically rather
than used as numerical benchmarks for the cascaded system.
}

{\color{black}
\section{Conclusion}
This letter developed an exact angular event sweep for
joint FA port selection, fixed-cardinality
RIS activation, and 1-bit phase control. The analysis
established deterministic gain bounds, activation
monotonicity, and diversity order for fixed
positive-definite RIS covariance and fixed FAS covariance
of any nonzero rank. Under perfect CSI,
minimum-element QoS adaptation preserves the outage
probability of full activation while reducing average
activation. Net EE improvement depends
on whether these savings outweigh the additional
processing, signaling, and reconfiguration overhead.
}
}
\bibliographystyle{IEEEtran}
\bibliography{references}
\end{document}